\newtheorem{theorem}{Theorem}
\newtheorem{lemma}{Lemma}
\newtheorem{corollary}{Corollary}
\def\be{\begin{equation}}
\def\ee{\end{equation}}
\def\ie{\emph{i.e.}, }
\def\iid{\emph{i.i.d.}}
\def\E{{\mathbb E}}
\def\a{\alpha}
\def\C{\bar{C}}
\def\D{\mathcal{D}}
\def\S{\mathcal{S}}
\def\P{{\mathbb P}}
\def\Sc{\mathcal{S}^\mathsf{c}}
\def\Dc{\mathcal{D}^\mathsf{c}}
\begin{document}
\title{Information Theoretic Cut-set Bounds on the Capacity of Poisson Wireless Networks}

\author{\IEEEauthorblockN{ Georgios Rodolakis\IEEEauthorrefmark{1}}
\IEEEauthorblockA{\IEEEauthorrefmark{1}Information Technologies Institute,  Centre for Research and Technology, Greece, rodolakis@iti.gr}
\thanks{This work was supported in part by the EU Marie Curie career integration grant INFLOW (FP7-294236).}
}

\maketitle

\begin{abstract}
This paper presents a stochastic geometry model for the investigation of fundamental information theoretic limitations in wireless networks.
We derive a new unified multi-parameter cut-set bound on the capacity of networks of arbitrary Poisson node density, size, power and bandwidth, under fast fading in a rich scattering environment.
In other words, we upper-bound the optimal performance in terms of total communication rate, under any scheme, that can be achieved between a subset of network nodes (defined by the cut) with all the remaining nodes. 
Additionally, 
we identify four different operating regimes, depending on the magnitude of the long-range and short-range signal to noise ratios. Thus, we confirm previously known scaling laws ({\it e.g.}, in bandwidth and/or power limited wireless networks), and we extend them with specific bounds. Finally, we use our results to provide specific numerical examples.
\end{abstract}

\section{Introduction}
The investigation of fundamental capacity limits of multi-node wireless networks is an open problem in information theory 
which consistently attracts the attention of researchers in recent years, 
as it is a difficult question with great potential practical interest. A way to approach the problem is to study the more restricted situation where wireless nodes are placed according to a spatial distribution (usually 2-dimensional), with a simplified propagation model. Furthermore, we can study the scaling behavior of the total network capacity at the limit where the number of nodes tends to infinity.

In this context, initial investigations 
focused on scaling laws with specific communication strategies, such as multi-hopping~\cite{GK},
providing important insights on the fundamental limits of wireless networks. Several works studied information-theoretic scaling laws, independent from the communication strategy. However, results usually provide only an asymptotic order for the network capacity ({\emph e.g.},~ \cite{snr_ot,dense}).
Most importantly, these results provided insights on cooperation schemes with almost optimal scaling behavior. For instance, 
\cite{dense} shows that {\it dense} (\ie of fixed area and increasing density) and {\it extended} (\ie of fixed density and increasing area) networks exhibit qualitatively different scaling behaviors with regard to the total network capacity (linear and sub-linear or square root increase with respect to the number of nodes, respectively).
In contrast, real networks have a fixed area and density and, in this sense, such scaling laws are of limited use for practical purposes. This limitation has been partially addressed with an insightful extension in~\cite{snr_ot}, where it is shown that important parameters defining the asymptotically optimal operating regime of a wireless network are the short range and long range signal to noise ratios (SNR).

This paper focuses on the derivation of fundamental cut-set bounds on the capacity of wireless networks. 
Taking a cut partitioning the network into two parts, we bound above the sum of the rates of communication passing through the cut, under any communication strategy.
We rely on a Poisson network model, which we analyze using a stochastic geometry methodology, under a rich and symmetric fading environment (which we describe in detail in Section~\ref{sec:model}).

To motivate our approach, consider a point-to-point channel of bandwidth $W$ Hertz, and additive white Gaussian noise (AWGN) with power spectral density $\frac{N}{2}$. The channel capacity is given by the simple formula: $C= W \cdot \log \left( 1+ \frac{P}{N W} \right)$, with $P$ the received power.
This formula identifies two operating regimes: for low SNR, we have $C\sim \frac{P}{N}$, and the capacity is power-limited;
for high SNR, we have $C\sim W \log  \frac{P}{N W}$, and the capacity is essentially bandwidth-limited. 

In this paper, we provide such a unified formula, as an upper bound on cut-set capacities in Poisson wireless networks. 
We then show that its asymptotic behavior is richer, but not too complicated (providing four different asymptotic regimes).
Identifying such operating regimes is of great usefulness for the design of efficient communication strategies.
We evaluate the cut-set capacity of approximately circular cuts of arbitrary radius, for arbitrary values of all the other network parameters, such as the node density, transmit power, channel bandwidth, noise spectral density. 
We confirm that our results are in agreement with the previously known scaling laws identified in~\cite{snr_ot}. 
Additionally,  
we derive specific numeric bounds that capture the continuous transitions between different operating regimes, complementing previous related work.

In Section~\ref{sec:model}, we introduce our channel and network model and we discuss our main results.
In Section~\ref{sec:cutsetbound}, we
prove our general cut-set bound (Theorem~\ref{th:approx_cutset}),
and we identify 
asymptotic 
operating regimes (Corollary~\ref{cor:asympt}).
We provide specific numerical examples in Section~\ref{sec:numerical}.

\section{Model and Main Results}
\label{sec:model}

\subsection{Channel Model}

We consider a network where nodes are equipped with wireless transceiver capabilities (with a single transmit and a single receive antenna) and transmissions occur at discrete times $t=1,2,\ldots$. 
Communication takes place over a flat channel of bandwidth $W$ Hertz around a carrier frequency $f_c$, with $f_c \gg W$.
Node positions remain fixed during a channel use.
Let $r_{ij}$ denote the distance between nodes $i$ and~$j$. 
The received power decays with the distance in a power law, with path loss exponent $\a > 2$.

We assume a fast-fading model.
We denote $H_{ij}[t]$ the complex base-band equivalent channel gain for transmissions from node $j$ to node $i$, at time $t$.
The gains depend on the distance between the node positions, and on the channel fading.
The channel gain $H_{ij}[t]$ has the form:
\be
\label{eq:Hij}
H_{ij}[t] = r_{ij}^{-\frac{\a}{2}} \cdot h_{ij}[t],
\ee
where  $r_{ij}^{-\frac{\a}{2}}$ models path loss, and $h_{ij}[t]$ is a stationary and ergodic random process that models channel fluctuations due to frequency flat fading. Without loss of generality, we let $\E[|h_{ij}[t]|^2] =1$.

We also make the two following modeling assumptions. First, the $h_{ij}[t]$'s are symmetric, \ie $h_{ij}[t]$ has the same distribution as  $-h_{ij}[t]$ (this implies a zero mean). Second, the $h_{ij}[t]$'s are independent for different $i,j$.

Our model is intended as an approximation of a rich scattering environment with a far-field assumption, \ie the node separation distance is sufficient for the channel independence and symmetry assumptions (and path loss model) to be realistic. It includes as special cases Rayleigh fading, as well as the \iid random phase model used in~\cite{snr_ot,dense}. In contrast, we do not model  non-zero-mean or correlated channel gains.

We denote $X_i[t]$ the symbol transmitted by node $i$ at time~$t$.
All nodes have an equal average power budget of $P$ Watts, \ie for all $i$, $\E[|X_i[t]|^2] \leq \frac{P}{W}$ Joules per symbol.
The signal received by node $i$ at time $t$ is given by
\be
Y_i[t] = \sum_{j \neq i} H_{ij}[t] \cdot X_j[t] + Z_i[t],
\label{eq:Yi}
\ee
where $Z_i(t)$ is white circularly symmetric complex Gaussian noise of power spectral density $N$ Watts per Hertz (\ie the real and imaginary parts each have variance $\frac{N}{2}$ per symbol). 

\subsection{Cut-set Capacity Bound}
\label{sec:modelbound}
We consider a cut partitioning the network into two complementary sets of nodes, denoted $\S$ and $\Sc$. 
We are interested in bounding above the sum of the rates of communication passing through the cut from $\Sc$ to $\S$, with arbitrary one-to-one source-destination pairings. 
The total rate is bounded above by the cut-set capacity $C_{\S,\Sc}$, defined as the maximum of the mutual information between transmitted and received symbols,
over all possible distributions of the transmitted symbols that satisfy the maximum power constraint.

Equivalently, the cut-set capacity  $C_{\S,\Sc}$ corresponds to the capacity of the multiple-input multiple-output (MIMO) channel between nodes in $\Sc$ and nodes in $\S$, with a per-antenna power constraint $P$.
Under the fast-fading model, the ergodic MIMO capacity in nats per second equals:
\be
\label{eq:ergodicMIMO}
C_{\S,\Sc} = \max_{\substack{Q \geq 0\\ \E[Q_{jj}] \leq P ,\forall j \in \Sc}} \E_H\left[W \log \det \left(I+\frac{1}{N W} H Q H^\ast \right)\right],
\ee
where $H_{ij}=r_{ij}^{-\frac{\a}{2}} h_{ij}$, and $Q$ is the positive semi-definite covariance matrix of the transmitted signal vector.

The factor $r_{ij}^{-\frac{\a}{2}}$ is assumed to be known at both the receiver and the transmitter, as the node positions are fixed.
The realization of $h_{ij}$ is just known at the receiver, whereas the transmitter only knows the channel distribution.

In our channel model, since the $h_{ij}$'s are independent and symmetrically distributed, the MIMO capacity formula can be simplified, based on~\cite[Corollary 1c]{abbe}; the input covariance matrix that maximizes the capacity is diagonal with all entries equal to the power constraint $P$:
\be
\label{eq:symmetricMIMO}
C_{\S,\Sc} = \E_{H}\left[W \log \det \left(I+\frac{P}{N W} H H^\ast \right)\right].
\ee
Therefore, in the optimal communication strategy, the transmit nodes send independent signals at full power, and there is no need to do any sort of transmit beamforming.

\paragraph*{Remark} We note that this upper bound is also valid in a general channel model (dropping the symmetry and independence assumptions) with arbitrary fading, under the condition that nodes may only transmit independent signals.

Using Hadamard's inequality ($I+\frac{P}{N W} H H^\ast$ is positive semi-definite), we obtain the upper bound:
\begin{align}
\label{eq:CMISO}
C_{\S ,\Sc} 
\leq \E_H \left[ W \sum_{i \in \S} \log  \left(1+\frac{P}{N W} \sum_{j \in \Sc} |H_{ij}|^2 \right) \right],
\end{align}
\ie the upper bound is the sum of the capacities of the multiple-input single-output (MISO) channels between nodes in $\Sc$ and each node in $\S$, with independent transmissions. 

Finally, using Jensen's inequality (the function $\log(1+x)$ is concave), and since $\E( |H_{ij}|^2)= r_{ij}^{-\a}$, we have:
\be
\label{eq:MISOJensen}
C_{\S ,\Sc} \leq W \sum_{i \in \S} \log \left(1+\frac{P}{N W} \sum_{j \in \Sc} r_{ij}^{-\a} \right),
\ee
which only depends on the geometry of the network (and not on the fading distribution).

\subsection{Poisson Network Model and Main Results}
\label{sec:results}

Consider a Poisson point process of uniform intensity~$\nu$ inside a network domain $\mathcal{A}$,  which determines the node positions.
The shape of the network domain does not matter, since for the upper-bound computations we will let it tend to the infinite plane to simplify the analysis.

\begin{figure}
\centering
\vskip -0.3cm
\includegraphics[width=4.9cm]{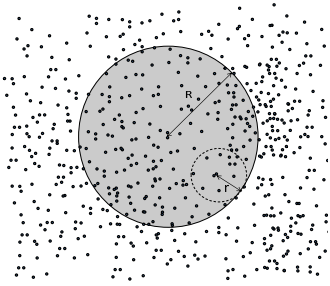}
\vskip -0.3cm
\caption{Circular cut partitioning the network domain into a disk $\D$ of radius $R$ and the remaining region $\Dc$, and a node at distance $r$ from the cut boundary.}
\label{fig:circularcut}
\end{figure}

We take an approximately circular cut of radius asymptotically equal to $R$, partitioning the network domain into two regions $\D$ and $\Dc$, as depicted in Figure~\ref{fig:circularcut} (the exact form of the cut will be clarified in Section~\ref{sec:approx_circular}).
Equivalently to Section~\ref{sec:modelbound}, we define the cut-set capacity bound $C_{\D,\Dc}$ on the communication rate achievable from nodes in $\Dc$ to nodes in $\D$. 
We denote $C_R = \E[C_{\D,\Dc}]$ the expectation of the cut-set capacity, over all node position configurations\footnote{We clarify that, for a given realization of the Poisson process, $C_R$ does not have an operational MIMO capacity meaning in the Shannon sense; it must be interpreted as the expected value of the Shannon capacity. However, we may also consider informally the case where nodes move slowly but remain Poisson distributed; then, $C_R$ is the average cut-set capacity over time.}. Our main theorem provides a simple multi-parameter bound on~$C_R$, 
as a function of $\nu$, $R$, $P$, $N$, $W$ and $\a$. The bound holds for any individual scaling behavior, 
as long as $\nu R^2 \to \infty$,
\ie
when the number of nodes in $\D$ becomes large.

\begin{theorem}
\label{th:approx_cutset}
When $\nu R^2 \to \infty$, the expected cut-set capacity is bounded by $C_R \leq \C$, with:
\[
\C \sim 2\pi \nu W \int_{\! \frac{d}{\sqrt{\nu}}}^R \log \left(1+s_r\right) (R-r) dr,
\]
where $s_r=\frac{2 \pi \nu r^{2-\a}}{\a-2}\frac{P}{N W}$, and the constant $d$ is the critical percolation radius for unit node density.
\end{theorem}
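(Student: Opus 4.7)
The plan is to start from the deterministic geometric bound~\eqref{eq:MISOJensen} --- which has already absorbed the fading expectation via Hadamard's and Jensen's inequalities --- and to push a second expectation, now over the Poisson point positions, through the logarithm, converting the random sum over $\D$ into an area integral via Campbell's formula. The one-dimensional integrand of $\C$ in the ``distance-to-cut'' variable $r$ should then drop out of the circular geometry.

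For the per-receiver contribution I would apply Jensen a second time, exchanging $\E$ and $\log$, on each summand of~\eqref{eq:MISOJensen} corresponding to a node $i\in \D$ at distance $r$ from the cut. The task reduces to bounding $\E[\sum_{j\in \Dc} r_{ij}^{-\a}]$; since every point of $\Dc$ lies at distance at least $r$ from $i$, one can enlarge the integration domain to $\mathbb{R}^2\setminus B(i,r)$ and apply Campbell's formula to obtain $\nu\cdot 2\pi r^{2-\a}/(\a-2)$. Multiplying by $P/(NW)$ produces exactly $s_r$, hence $\E[\log(1+\cdot)]\leq \log(1+s_r)$. Summing over receivers and applying Campbell once more, this time to the Poisson process on $\D$, gives $\nu\int_\D \log(1+s_{r(x)})\,dx$, which in the distance-to-boundary coordinate on a disk of radius $R$ has area element $2\pi(R-r)\,dr$ and reproduces the integrand of $\C$.

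The main obstacle, and the reason the integration starts at $d/\sqrt{\nu}$ rather than at $0$, is that the per-receiver bound is vacuous near the cut (formally $s_r\to\infty$ as $r\to 0$), while a strictly circular cut generically passes arbitrarily close to Poisson points. Here I would rescale lengths by $\sqrt{\nu}$ so the node density becomes unity and exploit the definition of $d$ as the critical continuum percolation radius: on the rescaled process, the Boolean model of disks of radius $d$ around the points does not percolate on the relevant scale, so with probability tending to $1$ as $\nu R^2\to\infty$ one can deform the nominal circle of radius $R$ into a closed curve that stays at distance at least $d/\sqrt{\nu}$ from every node while remaining inside an $o(R)$ tube of the original circle. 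Declaring the cut along this deformed curve removes the singular range $r<d/\sqrt{\nu}$ from the integral; the perturbation of the curve changes the enclosed area and the distance-to-boundary area element only by $(1+o(1))$ factors, so combining the three ingredients yields the asymptotic bound $C_R\leq \C$ claimed in the theorem.
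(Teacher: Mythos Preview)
Your proposal is correct and follows essentially the same three-step route as the paper: the per-receiver MISO bound via Jensen plus Campbell on $\Dc$ (the paper's Lemmas~\ref{lem:Qr}--\ref{lem:MISO}), the outer Campbell integral over $\D$ producing the $2\pi(R-r)\,dr$ element (Lemma~\ref{lemth:cutset}), and the percolation argument to justify the lower limit $d/\sqrt{\nu}$ (Lemma~\ref{lem:percolation}). The only sharpening the paper adds is quantitative: the vacant loop is found for every fixed $k<d$ (strictly subcritical, so that exponential decay of cluster diameters applies) inside an annulus of width $\frac{\delta}{\sqrt{\nu}}\log(\sqrt{\nu}R)$, and the limit $k\uparrow d$ recovers the stated lower endpoint --- your ``radius $d$'' phrasing sits exactly at criticality, where the needed decay is not immediate.
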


The parameter $s_r=\frac{2 \pi \nu}{\a-2} r^{2-\a} \frac{P}{N W}$,
corresponds to an upper bound on the expectation of 
the total SNR received by a given node, from all nodes at range at least $r$.

Hence, we can identify different asymptotic scaling laws, depending on the magnitude of $s_r$ 
at the upper ($r=R$) and lower ($r=\frac{d}{\sqrt{\nu}}$) limits of the integral (tending to $0$ or $\infty$). 
The derivations are detailed in~Corollary~\ref{cor:asympt}, in Section~\ref{sec:cutsetbound}.

Setting $n= \pi \nu R^2$ (the expected number of nodes in $\D$) in the latter, if we omit all the constants for simplicity, we have 
that $C_R$ is $O(\C)$, where $\C$ equals:
\[
\label{eq:asymptotics}
\left\{ 
 \begin{array}{l l l}
   W n \log(s_R),  & s_R = \omega(1) &(I)\\
    n^{2-\frac{\a}{2}}\frac{P_0}{N}, &  s_R = o(1),~\a < 3 &(II)\\
    \sqrt{n} \left(\frac{P_0}{N}\right)^{\frac{1}{\a-2}} W^{\frac{\a-3}{\a-2}}, 
&  \text{---},~\a > 3,~s_{\! \!\frac{d\,}{\sqrt{\nu}}}= \omega(1)  &(III)\\
    \sqrt{n} \frac{P_0}{N}, &    \text{---},~  \text{---},~ s_{\! \! \frac{d\,}{\sqrt{\nu}}}=o(1), &(IV)\\
  \end{array} 
\right.
\]
and $P_0=N W s_{\! \! \frac{d\,}{\sqrt{\nu}}}$
corresponds to the expected received power from nodes at range at least $\frac{d}{\sqrt{\nu}}$.

In words, we identify four different asymptotic regimes. 
When 
$s_R=\omega(1)$, the upper bound indicates that the cut-set capacity is linear in $n$ and bandwidth-limited~(I).
When $s_R=o(1)$, the capacity is power-limited and sub-linear in $n$ when $\a < 3$~(II), and both power (long-range) and bandwidth (short-range) limited when $\a>3$ and $s_{\! \! \frac{d\,}{\sqrt{\nu}}}=\omega(1)$~(III). When 
$s_{\! \! \frac{d\,}{\sqrt{\nu}}}=o(1)$, the power limitation dominates at all ranges (IV). In the two latter cases, the capacity bound is $\Theta (\sqrt{n})$.

Therefore, with $s_R$ corresponding to the long-range SNR, and $s_{\! \! \frac{d\,}{\sqrt{\nu}}}$ to the short-range SNR, 
the four described cases 
essentially map to the operating regimes identified in~\cite{snr_ot}, derived under a different perspective and methodology.
Accordingly, even though we do not compute lower bounds,
the relative asymptotic tightness of our bounds is established by comparing with these related results;
the four optimal communication schemes discussed in~\cite{snr_ot} would achieve almost order-optimal scaling performance if analyzed in our framework.

\section{Cut-set Capacity: Proof of Theorem~\ref{th:approx_cutset}}
\label{sec:cutsetbound}

From~(\ref{eq:MISOJensen}) in Section~\ref{sec:modelbound}, the cut-set capacity can be bounded above by the sum of MISO capacities, with independent transmissions at maximum power, and without fading. Hence, from now on, we assume that these conditions hold.

\subsection{MISO Bound}
\label{sec:misobound}
We consider a node $i$ at distance $r$ from the cut boundary, as depicted in Figure~\ref{fig:circularcut}. We denote $Q_r$ the total received SNR by node $i$  from all nodes in $\Dc$, and $M_r$ the MISO capacity from all nodes in $\Dc$ to $i$.
We compute upper bounds on the expectations $\E[Q_r]$ and $\E[M_r]$, over Poisson node positions.
\begin{lemma}
\label{lem:Qr}
$\E[Q_r] \leq s_r$, with
$s_r = \frac{2 \pi \nu r^{2-\a} }{\a-2} \frac{P}{N W}$.
\end{lemma}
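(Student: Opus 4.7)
The plan is to write $Q_r$ as a sum over the points of the Poisson process in $\mathcal{D}^\mathsf{c}$ and apply Campbell's formula to convert the expectation into a deterministic integral, which I then bound by extending the region of integration to a half--plane-free annulus of inner radius $r$ centered at node $i$.

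First, under the assumptions reduced to in Section~\ref{sec:misobound} (independent transmissions at full power, no fading), a node $j$ at distance $r_{ij}$ from $i$ contributes $\frac{P}{NW} r_{ij}^{-\a}$ to the received SNR. Hence
\[
Q_r = \frac{P}{NW}\sum_{j\in\Dc} r_{ij}^{-\a}.
\]
Conditioning on the presence of node $i$ at its position, by Slivnyak's theorem the remaining nodes form a Poisson point process of intensity $\nu$, so Campbell's theorem gives
\[
\E[Q_r] = \frac{P\nu}{NW}\int_{\Dc} \|x-x_i\|^{-\a}\,dx.
\]

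Second, I would bound the domain of integration. By hypothesis, $x_i$ lies at distance $r$ from the cut boundary, so every point of $\Dc$ is at Euclidean distance at least $r$ from $x_i$. Therefore
\[
\int_{\Dc} \|x-x_i\|^{-\a}\,dx \;\leq\; \int_{\{x:\,\|x-x_i\|\geq r\}} \|x-x_i\|^{-\a}\,dx.
\]
Switching to polar coordinates centered at $x_i$, this last integral equals
\[
2\pi \int_r^\infty \rho^{1-\a}\,d\rho \;=\; \frac{2\pi\, r^{2-\a}}{\a-2},
\]
where convergence uses the assumption $\a>2$. Assembling the pieces yields $\E[Q_r] \leq \frac{2\pi\nu r^{2-\a}}{\a-2}\cdot\frac{P}{NW} = s_r$.

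The only genuine subtlety is the fact that the cut is described as only approximately circular (the exact form is deferred to Section~\ref{sec:approx_circular}), so strictly speaking one must check that the inclusion $\Dc \subseteq \{x:\|x-x_i\|\geq r\}$ still holds for the precise shape of the cut used there. Provided $r$ is defined as the true distance from $x_i$ to the cut boundary, this inclusion is automatic and the argument goes through; I would flag this as the point to verify when the cut is finally specified. Otherwise the lemma is essentially a direct computation with Campbell's formula.
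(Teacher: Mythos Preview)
Your proof is correct and follows essentially the same approach as the paper: apply Campbell's theorem to turn $\E[Q_r]$ into an integral over $\Dc$, then bound it by the integral over $\{\|x-x_i\|\geq r\}$ and evaluate in polar coordinates. Your explicit invocation of Slivnyak's theorem and your remark about the cut shape are slightly more careful than the paper's terse version, but the argument is the same.
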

\begin{proof}
As transmissions are independent, the expectation can be computed from Campbell's theorem~\cite[p. 28]{poisson}:
\begin{align}
\E[Q_r]  &=  \int_{\Dc}  \rho^{-\a} \frac{P}{NW} dS\nonumber\\
&\leq \int_0^{2\pi}d\phi \int_{r}^\infty \rho d\rho \cdot \nu  \rho^{-\a} \frac{P}{NW},
\end{align}
where, for the upper bound, we let $\Dc$ tend to the infinite plane, and we consider all SNR contributions from nodes at range at least~$r$ from~$i$ (instead of just the nodes in $\Dc$).
\end{proof}

\begin{lemma}
\label{lem:MISO}
$\E[M_r] \leq W \log \left(1+s_r\right)$.
\end{lemma}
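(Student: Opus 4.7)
The plan is to observe that $M_r$ is already, by (\ref{eq:MISOJensen}) applied to the single receiver $i$, of the form $W \log(1 + Q_r)$, where $Q_r = \frac{P}{NW}\sum_{j \in \Dc} r_{ij}^{-\alpha}$ is the (deterministic, given node positions) total SNR received by~$i$ from $\Dc$. So the only randomness left in $M_r$ after we have already removed the fading (via the two Jensen/Hadamard steps leading to (\ref{eq:MISOJensen})) is the randomness of the Poisson node locations.

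First I would write explicitly $M_r \leq W \log(1+Q_r)$. Then I would take expectation over the Poisson point process on both sides. The key step is a second application of Jensen's inequality: since $x \mapsto \log(1+x)$ is concave on $[0,\infty)$, we have
\[
\E[M_r] \leq W\, \E\bigl[\log(1+Q_r)\bigr] \leq W \log\bigl(1 + \E[Q_r]\bigr).
\]
Finally, I would invoke Lemma~\ref{lem:Qr}, which gives $\E[Q_r] \leq s_r$, and combine it with the monotonicity of $\log(1+\cdot)$ to conclude
\[
\E[M_r] \leq W \log(1 + s_r).
\]

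There is essentially no obstacle here; the lemma is a clean corollary of (\ref{eq:MISOJensen}) for a single receiver, Jensen's inequality applied to the Poisson randomness, and Lemma~\ref{lem:Qr}. The only minor point worth noting is that the two uses of Jensen's inequality act on different sources of randomness (the fading was handled to get (\ref{eq:MISOJensen}), and here we handle the node positions), but both are justified by the concavity of $\log(1+\cdot)$ and the fact that $Q_r \geq 0$ almost surely.
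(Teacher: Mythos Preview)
Your proof is correct and follows essentially the same route as the paper: write $M_r = W\log(1+Q_r)$ for the single-receiver MISO capacity (the paper states this as an equality since the fading has already been removed before Section~\ref{sec:cutsetbound}, but your inequality is harmless), then apply Jensen's inequality to the Poisson randomness and invoke Lemma~\ref{lem:Qr} together with the monotonicity of $\log(1+\cdot)$. There is no substantive difference between your argument and the paper's.
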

\begin{proof}
From the formula for the AWGN MISO capacity, with independent transmissions and 
total received SNR~$Q_r$:
\be
M_r =  W \log(1+Q_r).
\ee
As $\log(1+x)$ is concave, we conclude using Jensen's inequality, \ie
$\E[\log(1+Q_r)] \leq \log(1+\E[Q_r]) \leq \log(1+s_r)$.
\end{proof}

\subsection{Circular Cut with Empty Outer Strip}
\label{sec:cusetsub}

We initially assume that the cut defining $\D$ and $\Dc$ is circular with radius exactly~$R$, and the outer strip of the disk~$\D$ of width $\frac{d}{\sqrt{\nu}}$ is empty. In the following lemma, we evaluate the expected cut-set capacity~$C_R$ under these assumptions.

\begin{lemma}
\label{lemth:cutset}
The expected cut-set capacity is $C_R \leq  \C$, with:
\be
\label{eq:I}
\C = 2\pi \nu W \int_\frac{d}{\sqrt{\nu}}^R \log\left(1+s_r \right) (R-r) dr.
\ee
\end{lemma}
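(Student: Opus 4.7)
The plan is to start from the MISO upper bound~(\ref{eq:MISOJensen}), which writes the (fading-free) cut-set capacity as a sum, over the nodes $i \in \D$, of single-node MISO capacities $M_{r_i}$, where $r_i = R - \rho_i$ denotes the distance from node $i$ to the cut boundary (with $\rho_i$ its distance to the center of the disk). Since the fading has already been absorbed in the MISO step, all that remains is to take the expectation with respect to the Poisson positions.

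The central computational step is to apply Campbell's theorem (the same tool used in Lemma~\ref{lem:Qr}) to the random sum $\sum_{i \in \D} M_{r_i}$. Here there is a subtlety: each summand $M_{r_i}$ depends not only on the position of $i$, but also on the random configuration of transmitters in $\Dc$. However, because the restrictions of the PPP to the disjoint regions $\D$ and $\Dc$ are independent, I can condition on the points in $\D$ first, take the (conditional) expectation over $\Dc$, and then apply Campbell's theorem to what has become a deterministic function of the receiver positions. This gives
\[
\E\!\left[\sum_{i \in \D} M_{r_i}\right] = \nu \int_{\D} \E[M_{r(x)}]\, dS(x),
\]
where the inner expectation is over the Poisson configuration in $\Dc$ and $r(x) = R - \|x\|$.

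The last step is a change to polar coordinates centered at the disk center. The empty-strip assumption restricts the integration to $\rho \in [0, R - d/\sqrt{\nu}]$, i.e.\ $r \in [d/\sqrt{\nu}, R]$, and the area element becomes $(R-r)\, dr\, d\theta$; the angular integration contributes a factor $2\pi$. Substituting the bound $\E[M_r] \leq W \log(1 + s_r)$ from Lemma~\ref{lem:MISO} yields exactly the claimed expression for $\C$. The only mildly delicate point is the decoupling of $M_{r_i}$ from its random environment inside Campbell's theorem; once the $\D/\Dc$ independence of the PPP is invoked, the rest is a routine change of variables together with the direct application of Lemma~\ref{lem:MISO}.
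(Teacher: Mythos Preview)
Your proposal is correct and follows essentially the same route as the paper: start from~(\ref{eq:MISOJensen}), take the expectation over Poisson positions, apply Campbell's theorem with the per-node function $\E[M_r]$, substitute Lemma~\ref{lem:MISO}, and change variables $r=R-\rho$. The only difference is cosmetic: the paper justifies pushing the expectation inside the sum by plain linearity of expectation, whereas you invoke the independence of the PPP on the disjoint regions $\D$ and $\Dc$ to decouple the inner expectation over transmitters from the outer Campbell integral over receivers---a slightly more explicit (and arguably more careful) version of the same step.
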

\begin{proof}
Taking the average over Poisson node positions in~(\ref{eq:MISOJensen}), we can move the expectation inside the sum
due to the linearity of expectations (\ie $\E[X+Y] = \E[X]+\E[Y]$, even if $X$ and $Y$ are dependent random variables).

Therefore, applying Campbell's theorem (for the first moment) with the function $M(r)=\E[M_r]$:
\be
\label{eq:crint}
C_R  \leq \int_0^{2\pi} d\phi \int_0^{R-\frac{d}{\sqrt{\nu}}} \rho d\rho \cdot \nu M(R-\rho).
\ee
Finally, we
substitute Lemma~\ref{lem:MISO} into~(\ref{eq:crint}), and we change the integration variable to $r=R-\rho$.
\end{proof}

\subsection{Approximately Circular Cut}
\label{sec:approx_circular}
We now prove, using percolation theory, that there exists indeed a cut of radius approximately $R$, with an empty outer strip of width $\frac{d}{\sqrt{\nu}}$, where $d$ is the critical percolation radius for unit node density, as long as the expected number of nodes in $\D$ tends to infinity, \ie $\nu R^2 \to \infty$.

\begin{lemma}
\label{lem:percolation}
For some constant $\delta>0$, the annulus 
defined by two concentric circles of radii $R$ and $R+\frac{\delta}{\sqrt{\nu}} \log(\sqrt{\nu} R)$
contains almost surely (when $\nu R^2 \to \infty$) a vacant loop of width $\frac{k}{\sqrt{\nu}}$, for any constant $k < d$.
\end{lemma}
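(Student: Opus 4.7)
The plan is to rescale so that the Poisson intensity is unit, and then to invoke exponential decay for sub-critical continuum percolation together with a duality between occupied radial crossings and vacant encircling loops. Under the rescaling $x \mapsto \sqrt{\nu}\, x$, the point process becomes unit-intensity, the inner boundary acquires radius $R' := \sqrt{\nu}\, R \to \infty$, the annular width becomes $\delta \log R'$, and a vacant loop of width $k/\sqrt{\nu}$ in the original model corresponds precisely to one of width $k$ in the rescaled Boolean model (disks of radius $k$ centered at the Poisson points). Since $k < d$, and $d$ is by definition the critical radius at unit intensity, this Boolean model lies strictly in the sub-critical phase.

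I would then invoke the classical exponential-decay theorem for sub-critical continuum percolation: there exist constants $C, \mu > 0$, depending only on $k$ and $d$, such that the probability that the occupied cluster of a given Poisson point extends to Euclidean distance at least $L$ is at most $C e^{-\mu L}$. The target event---existence of a vacant loop encircling the inner disk inside the annulus---is, by standard planar duality (a consequence of the Jordan curve theorem in the annular topology), the complement of the event that some occupied path connects the inner boundary to the outer boundary of the annulus.

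To bound the probability of such an occupied radial crossing, observe that any such crossing must originate at a Poisson point lying within distance $k$ of the inner boundary whose cluster has radius at least $\delta \log R' - k$. The expected number of Poisson points within distance $k$ of the inner boundary is $O(R')$, and for each such point the cluster exceeds the required radius with probability at most $C e^{\mu k} (R')^{-\mu \delta}$. A first-moment union bound then yields an occupied-crossing probability of order $(R')^{1 - \mu \delta}$; choosing any constant $\delta > 2/\mu$ drives this bound to zero as $\nu R^2 \to \infty$, establishing the lemma.

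The main obstacle, and where the real probabilistic content is hidden, is the invocation of the exponential-decay input in the sub-critical Boolean model; this is a classical but non-trivial result. Everything else---the rescaling, the planar duality, and the first-moment union bound over candidate starting Poisson points at the inner boundary---amounts to routine geometric and combinatorial bookkeeping.
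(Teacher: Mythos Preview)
Your proof is correct and follows essentially the same route as the paper: both use the planar duality between a vacant encircling loop and an occupied radial crossing of the annulus, invoke exponential decay of the cluster radius in the sub-critical Boolean model (the paper cites Meester--Roy, Theorem~2.4), and finish with a union bound over $O(\sqrt{\nu}\,R)$ candidate starting locations along the inner boundary, yielding a crossing probability of order $(\sqrt{\nu}\,R)^{1-\mu\delta}\to 0$. The only cosmetic difference is that the paper discretizes the inner circle into $\sim 2\pi R\sqrt{\nu}/k$ deterministic mesh points and takes the union over those, whereas you take a first-moment bound over the random Poisson points lying within distance $k$ of the inner boundary; also, $\delta>1/\mu$ already suffices (your $\delta>2/\mu$ is harmless since the lemma only asks for \emph{some} $\delta$).
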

\begin{proof}
See appendix.
\end{proof}

To complete the proof of Theorem~\ref{th:approx_cutset}, we 
bound the expected cut-set capacity of the approximately circular cut.
\begin{proof}
The cut-set bound can be computed from Lemma~\ref{lemth:cutset}.
Since Lemma~\ref{lem:percolation} holds for any $k<d$, we can assume that the smaller distance between two nodes at opposite sides of the cut tends to $\frac{d}{\sqrt{\nu}}$.
The larger distance between opposite side nodes is $R+\frac{\delta}{\sqrt{\nu}} \log (\sqrt{\nu} R) \sim R$. Therefore, it can be verified that the integral remains asymptotically equivalent if we take $R$ as the upper limit in Lemma~\ref{lemth:cutset}.
\end{proof}

\begin{corollary}
\label{cor:asympt}
When $\nu R^2 \to \infty$, $C_R \leq \C$, with:
\[
  \C {\sim} \left\{ 
  \begin{array}{l l}
    \pi \nu R^2 W \log (\frac{\nu R^{2-\a} P}{N W}),  &s_R = \omega(1)\\
    K_1 \nu^2 R^{4-\a} \frac{P}{N}, &s_R = o(1),~\a<3\\
    4 \pi^2 \nu^2 R \log(R) \frac{P}{N}, &\text{---},~\a=3\\
    K_2 \nu^{\frac{\a-1}{\a-2}} R \left(\frac{P}{N}\right)^{\frac{1}{\a-2}} W^{\frac{\a-3}{\a-2}},&\text{---},~\a > 3,~s_{\! \! \frac{d\,}{\sqrt{\nu}}}= \omega(1)\\
    K_3 \nu^{\frac{1+\a}{2}} R \frac{P}{N}, &\text{---},~\text{---},~s_{\! \! \frac{d\,}{\sqrt{\nu}}}=o(1),\\
  \end{array} \right.
\]
and 
$K_1= \frac{4 \pi^2}{(\a-2)(3-\a) (4-\a)}$, 
$K_2= \frac{(2 \pi)^{\frac{\a-1}{\a-2}}}{(\a-2)^{\frac{1}{\a-2}}}\frac{\pi}{\sin (\frac{\pi}{\a-2})}$,
$K_3=\frac{4 \pi^2 d^{3-\a}}{(\a-2) (\a-3)}$.
\end{corollary}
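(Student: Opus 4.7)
The plan is to start from the integral $\C \sim 2\pi \nu W \int_{d/\sqrt{\nu}}^R \log(1+s_r)(R-r)\,dr$ obtained in Theorem~\ref{th:approx_cutset} and to evaluate it asymptotically by exploiting that $s_r = \frac{2\pi\nu}{\a-2} r^{2-\a} P/(NW)$ is strictly decreasing in $r$ (since $\a > 2$). The five cases stated correspond exactly to the possible orderings of $s_R$, $s_{d/\sqrt{\nu}}$ and the value $1$: in each regime I replace $\log(1+s_r)$ by $\log s_r$ on the subrange where $s_r \gg 1$ (bandwidth-limited) and by $s_r$ on the subrange where $s_r \ll 1$ (power-limited), and verify that the integral concentrates in a subregion where the chosen approximation is uniform.

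For the bandwidth-limited regime (case~I), $s_r \geq s_R \to \infty$ uniformly, so $\log(1+s_r) \sim \log s_r = \log s_R + (\a-2)\log(R/r)$. The first summand contributes $(R^2/2)\log s_R$, while the second yields an $O(R^2)$ term via the standard computation $\int_0^R \log(R/r)(R-r)\,dr = 3R^2/4$, which is of lower order. Multiplying by $2\pi\nu W$ gives $\pi\nu R^2 W \log s_R$, matching the claim after dropping the multiplicative constant inside the logarithm. For the fully power-limited regimes (cases~II, $\a=3$, and~V), the uniform approximation $\log(1+s_r) \sim s_r$ reduces the computation to the elementary integral $\int_{d/\sqrt{\nu}}^R r^{2-\a}(R-r)\,dr$. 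When $\a < 3$, $r^{2-\a}$ is integrable at $0$, the integral concentrates near $r = \Theta(R)$, the lower limit is negligible, and a direct evaluation yields the $K_1$ constant. When $\a > 3$ with $s_{d/\sqrt{\nu}} = o(1)$, the integrand is integrable at infinity and concentrates near the lower limit where $R-r \sim R$, giving $R(d/\sqrt{\nu})^{3-\a}/(\a-3)$ and ultimately $K_3$. The $\a = 3$ boundary gives $R\log(R\sqrt{\nu}/d) + O(R) \sim R\log R$ up to the scaling convention for $\nu$ and $R$.

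The main obstacle is the mixed regime~III ($\a > 3$, $s_R = o(1)$, $s_{d/\sqrt{\nu}} = \omega(1)$), where $s_r$ crosses unity at the critical radius $r^* = [2\pi\nu P/((\a-2)NW)]^{1/(\a-2)}$ and both approximations contribute at the same order. To obtain the exact constant, I change variables $u = r/r^*$, under which $s_r = u^{2-\a}$. Since $r^*/R \to 0$ (from $s_R = o(1)$) and $(d/\sqrt{\nu})/r^* \to 0$ (from $s_{d/\sqrt{\nu}} = \omega(1)$), the integration bounds extend to $[0,\infty)$ in the limit, and on the concentration region $R - r \sim R$, so $\C \sim 2\pi\nu W R r^* \int_0^\infty \log(1+u^{2-\a})\,du$. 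The remaining integral is classical: integration by parts (the boundary terms vanish because $\a > 3$, controlling both $u \to 0^+$ and $u \to \infty$) followed by the substitution $t = u^{\a-2}$ reduces it to $\int_0^\infty t^{1/(\a-2)-1}/(1+t)\,dt = \pi/\sin(\pi/(\a-2))$ by Euler's reflection formula. Substituting the expression for $r^*$ and collecting powers of $\nu$, $P$, $N$, $W$ then produces exactly the claimed constant $K_2$ with the stated exponents.
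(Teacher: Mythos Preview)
Your argument is correct but proceeds differently from the paper. The paper computes the antiderivative $I_r=\int \log(1+s_r)(R-r)\,dr$ in closed form via ordinary hypergeometric functions ${}_2F_1$, then extracts the leading behaviour at each endpoint using the connection formula for ${}_2F_1$ at large argument (equation~(\ref{eq:sd})); the case analysis reduces to deciding whether $I_R$ or $I_{d/\sqrt{\nu}}$ dominates. You instead analyse the integral directly: in the pure regimes you replace $\log(1+s_r)$ by $\log s_r$ or by $s_r$ on the subrange where the mass concentrates, and in the mixed regime you rescale by the crossover radius $r^*$ and evaluate $\int_0^\infty \log(1+u^{2-\a})\,du$ through integration by parts and the Beta integral/Euler reflection identity. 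Your route is more elementary (no special-function identities) and makes the concentration structure explicit; the paper's route is more uniform in $\a$ and produces the full expansion~(\ref{eq:Ir}), which it actually uses for the second-order constant in the numerics. One wording point: in your case~II the replacement $\log(1+s_r)\sim s_r$ is not uniform on $[d/\sqrt{\nu},R]$ when $s_{d/\sqrt{\nu}}$ happens to be large, but your concentration remark (the integral lives at $r=\Theta(R)$ for $\a<3$) is precisely what closes this gap, and is the same device the paper invokes when treating the excluded values of $\a$.
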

\begin{proof}
See appendix.
\end{proof}

\section{Numerical Results}
\label{sec:numerical}

\begin{figure}[t!]
\centering
\hskip -1.5cm
\includegraphics[width=6.5cm]{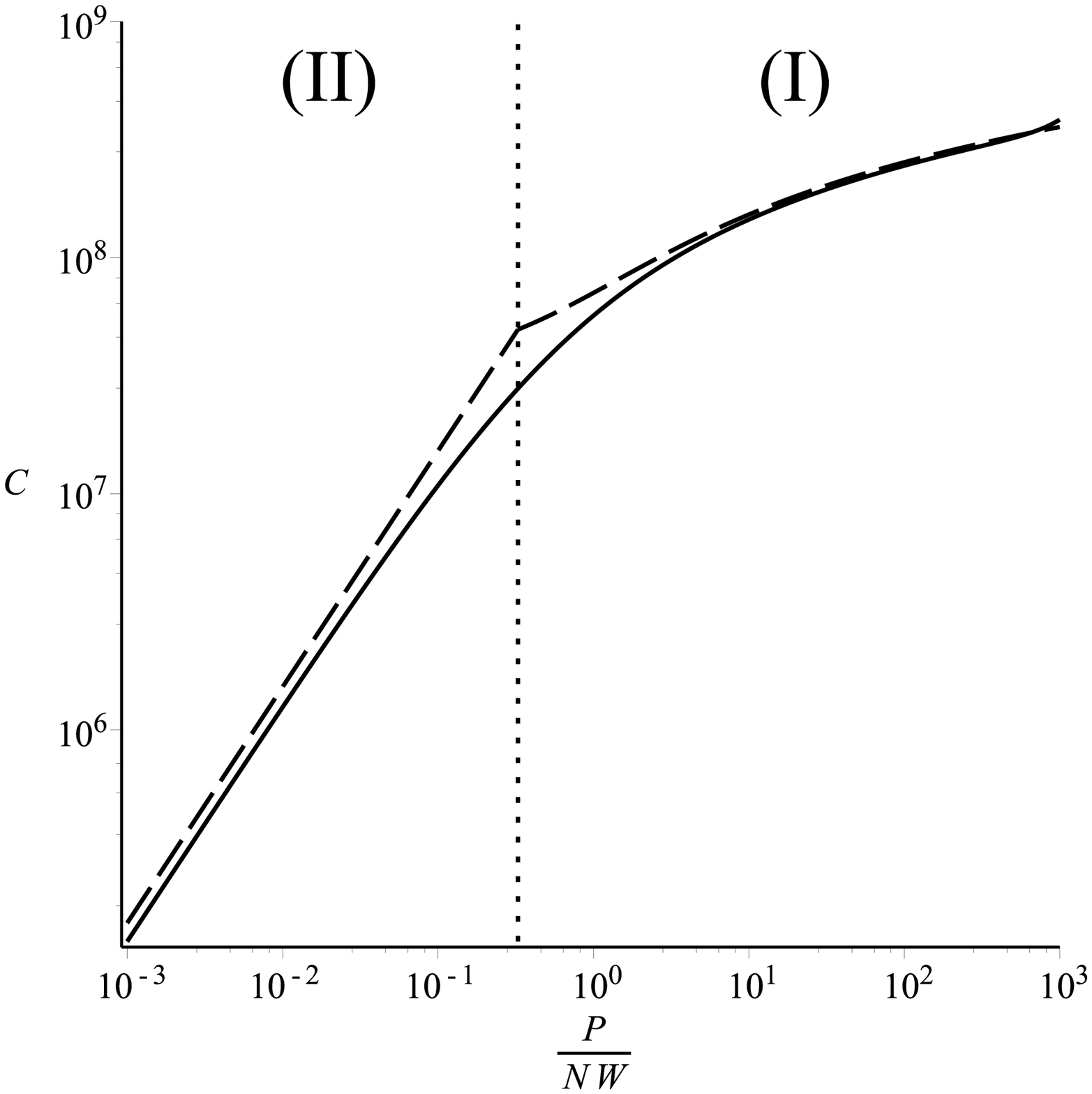}
\vskip -4.3cm
\hskip 4.6cm
\includegraphics[width=4.2cm]{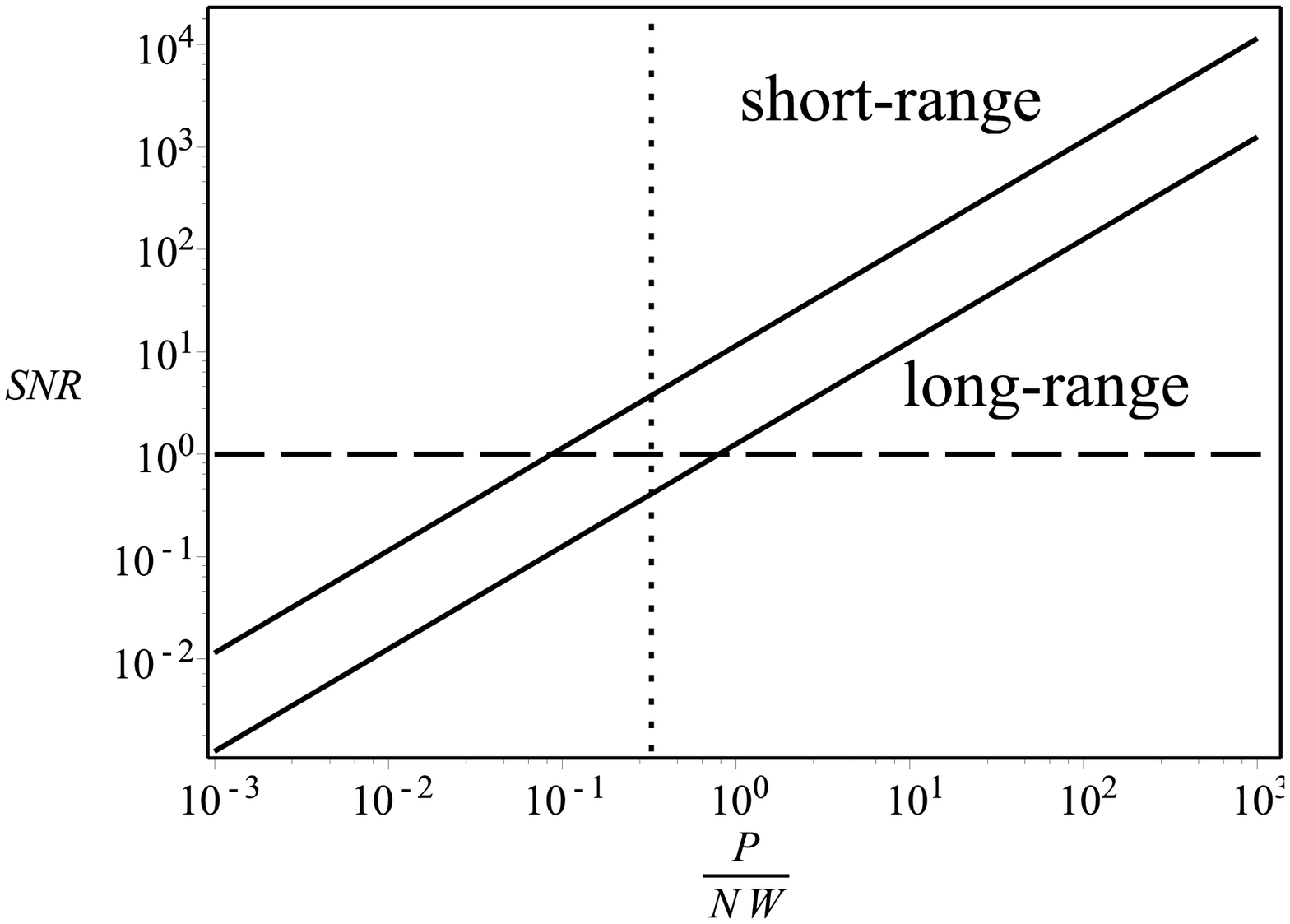}
\vskip 1.3cm
\hskip -1.5cm
\includegraphics[width=6.5cm]{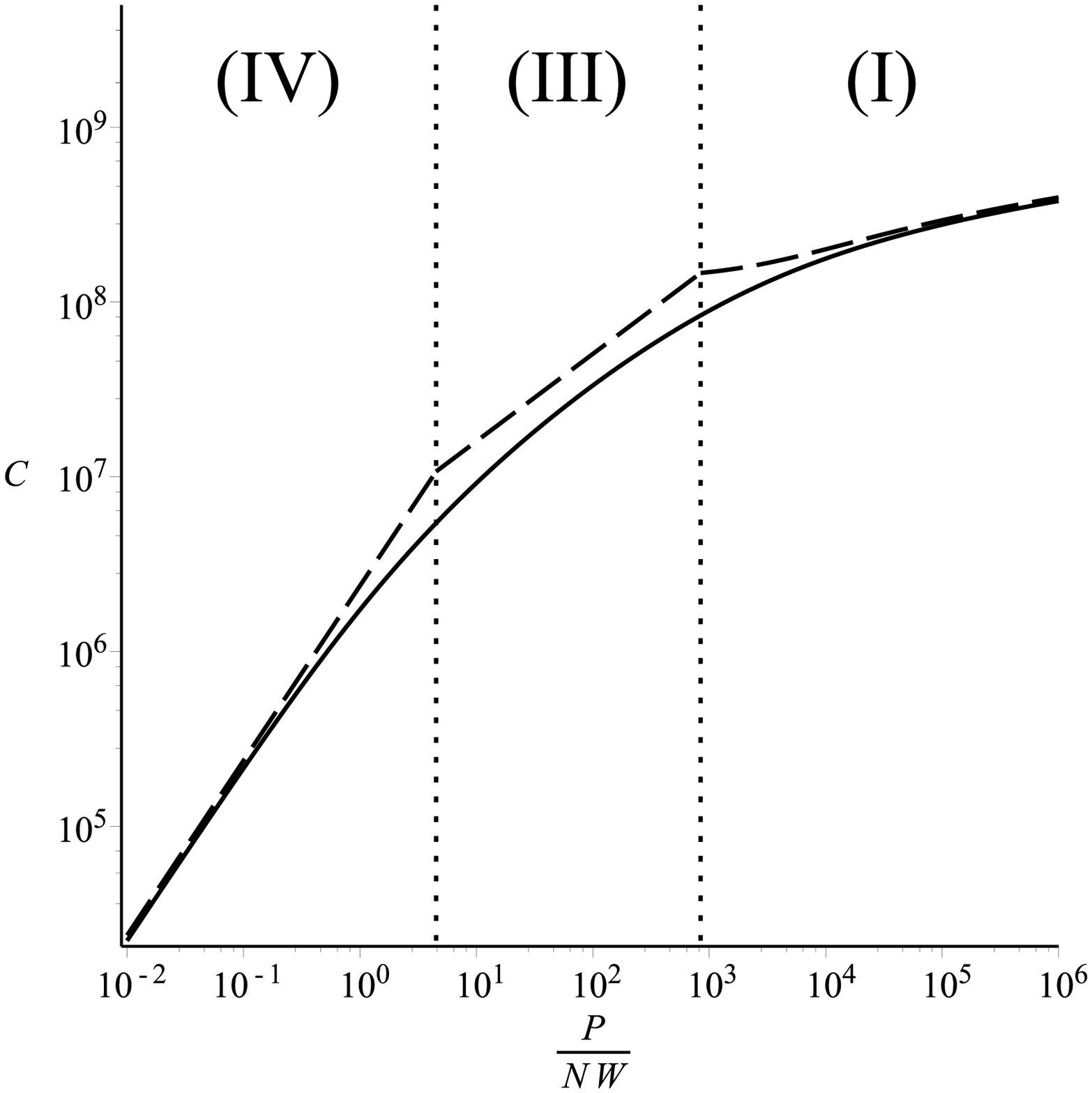}
\vskip -4.3cm
\hskip 4.6cm
\includegraphics[width=4.2cm]{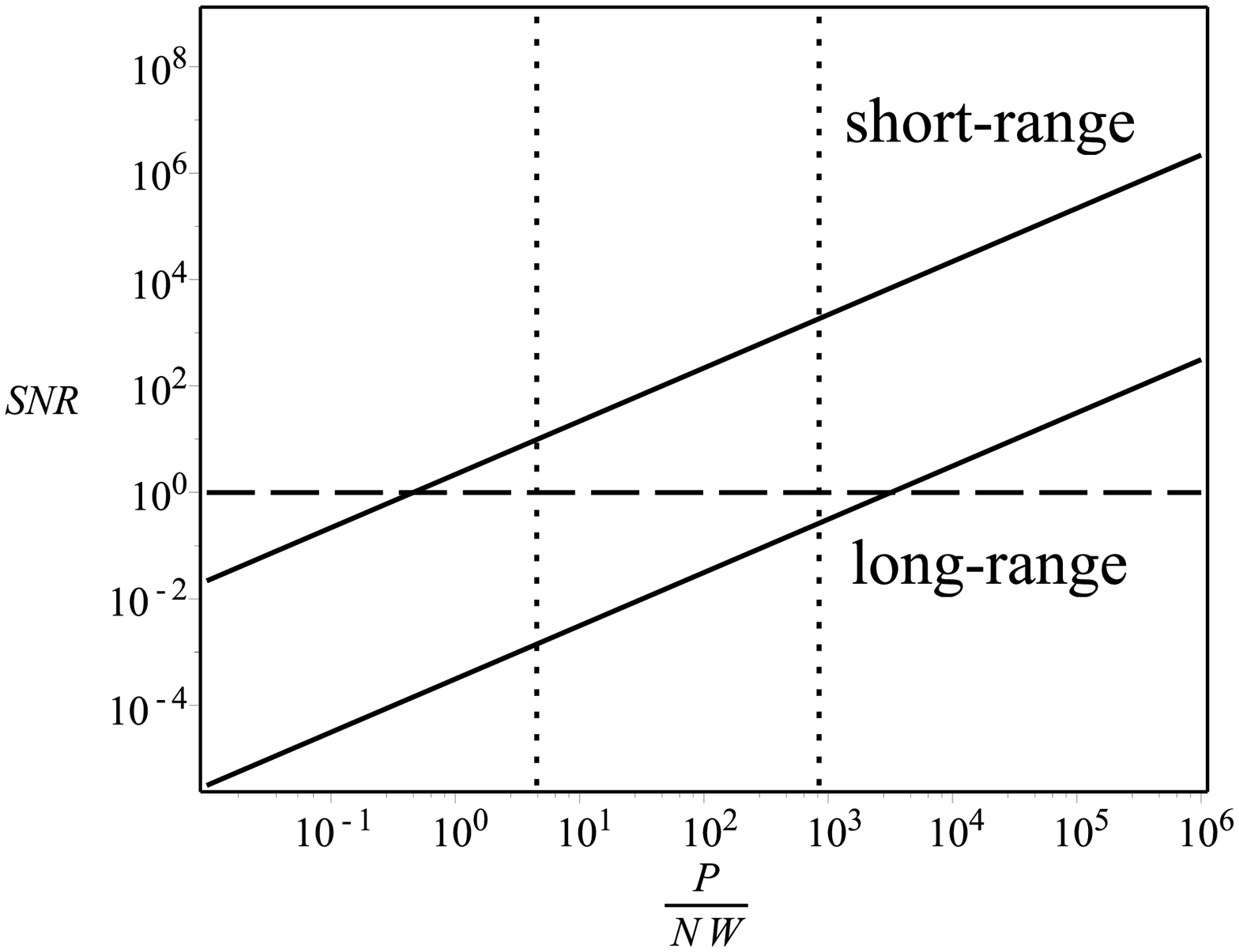}
\vskip 0.8cm
\caption{Cut-set capacity bound $C$, versus the SNR parameter $\frac{P}{N W}$, 
for $\a=2.5$ (top) and $\a=4$ (bottom), and corresponding asymptotic regimes. Small graphs plot the long-range and short-range SNRs.}
\label{fig:capacity_alpha}
\end{figure}

We provide numerical examples, illustrating the derived cut-set bounds.
Figure~\ref{fig:capacity_alpha} depicts log-log plots of the bounds on the expected cut-set capacity $C_R$ in bits per second, by varying the signal to noise parameter  $\frac{P}{N W}$, for $\a=2.5$ (top) and $\a=4$ (bottom). The remaining parameters are fixed: density $\nu=1$, radius $R=100$, bandwidth $W=10^3$. We use the numeric estimate $d=1.198$ for the percolation radius. 
The solid lines plot the upper bound from Theorem~\ref{th:approx_cutset}. The dashed lines are the asymptotic bounds in Corollary~\ref{cor:asympt} (for case~(I), we add the second-order constant factor from~(\ref{eq:Ir}) in the proof).
Insets plot the long-range ($s_R$) and short-range ($s_{\! \! \frac{d\,}{\sqrt{\nu}}}$) SNRs.

The figures illustrate the continuous transitions between the four different operating regimes we described in Section~\ref{sec:results}.
When $\a<3$, we have two different operating regimes (Figure~\ref{fig:capacity_alpha}, top).
When the long-range SNR is small, the cut-set capacity increases linearly, corresponding to the power-limited regime (II). 
When the long-range SNR becomes larger, we observe a slower logarithmic increase, and we identify the bandwidth-limited regime (I).
In contrast, when $\a>3$, the evolution of the SNR reveals three operating regimes (Figure~\ref{fig:capacity_alpha}, bottom).
When both the short-range and long-range SNRs are small, the linear capacity growth indicates the power-limited regime (IV); then, when the short-range SNR becomes large but the long-range SNR is still small, the slope changes and the regime is both power and bandwidth-limited~(III); finally, when the long-range SNR becomes large too, the capacity growth slows down considerably, as we transition to the bandwidth limited regime (I).   

\section{Conclusion}
\label{sec:conclusion}

We derived a new multi-parameter cut-set upper bound
on the capacity of wireless networks (Theorem~\ref{th:approx_cutset}), under fast fading with symmetrically distributed and independent channels (or with arbitrary channels, but under the condition that nodes may only transmit independent signals).
The asymptotic analysis (Corollary~\ref{cor:asympt})
reveals four operating regimes, which can be mapped to previously known scaling laws~\cite{snr_ot}, extending them with specific bounds.
The identification of such operating regimes is essential for the design of efficient communication strategies.

\section{Appendix}

\subsection{Proof of Corollary~\ref{cor:asympt}}

The integral in Theorem~\ref{th:approx_cutset} equals
$2 \pi \nu W \cdot  \left. I_r \right|_{\frac{d}{\sqrt{\nu}}}^R$,
where
$I_r=\int \log(1+s_r) (R-r) dr$ can be evaluated in closed form. 
We initially assume that $\{\frac{1}{\a-2}, \frac{2}{\a-2} \}\neq 1,2,\ldots$, to obtain a general formula.
This excludes $\a=3, 4$, while all other excluded values are smaller than $3$. 

We recall the definition of ordinary hypergeometric functions~\cite[Ch.  15]{nist}: ${}_2F_1 (a,b;c;x) =\sum_{n=0}^{\infty} \frac{(a)_n (b)_n}{(c)_n} \frac{x^n}{n!}$, $c \neq 0, -1, -2, \ldots$, and $(x)_n=x (x+1)\ldots(x+n-1)$ is the rising factorial (with $(x)_0=1$).
We have:
\begin{align}
\label{eq:Ir}
I_r = \log(&1+s_r) \left(R-\frac{r}{2}\right)r +(\a-2)\left(R-\frac{r}{4}\right) r\nonumber\\
&- (\a-2) \left(R g_1(s_r)-\frac{r}{4} g_2(s_r) \right) r,
\end{align}
with $g_k(x) = {}_2F_1\left(1,-\frac{k}{\a-2};1-\frac{k}{\a-2}; -x\right), k=\{1,2\}$.

The definitions of $g_k(s_r)$ as hypergeometric functions yield full asymptotic expansions for $s_r=o(1)$.

Using a linear transformation~\cite[eq. 15.8.2]{nist}, we obtain full asymptotic expansions for $s_r=\omega(1)$:
\begin{align}
g_k(s_r)=&\frac{1}{s_r} \frac{k\cdot {}_2F_1\left(1,1+\frac{k}{\a-2};2+\frac{k}{\a-2}; -\frac{1}{s_r}\right) }{k+\a-2}\nonumber\\
&+ s_r^{\frac{k}{\a-2}} \frac{k \pi}{(\a-2) \sin (\frac{k \pi}{\a-2})},~~k=\{1,2\}.
\label{eq:sd}
\end{align}

From~(\ref{eq:Ir}) and~(\ref{eq:sd}), the asymptotic analysis of $I_r$  is straightforward. 
The main terms for both integration limits are:
\begin{align}
I_R &\sim \left\{ 
\begin{array}{l l}
\frac{R^2}{2} \log(s_R), \quad &s_R=\omega(1)\\ 
\frac{2 \pi \nu R^{4-\a}}{(\a-2)(3-\a) (4-\a)} \frac{P}{N W}, ~&s_R=o(1)
\end{array}
\right.
\label{eq:IR}
\\
I_{\! \! \frac{d\,}{\sqrt{\nu}}} &\sim \left\{ 
\begin{array}{l l}
 R \left(\frac{2 \pi \nu}{\a-2} \right)^{\frac{1}{\a-2}} \frac{\pi}{\sin (\frac{\pi}{\a-2})}, \quad \quad &s_{\! \! \frac{d\,}{\sqrt{\nu}}}=\omega(1)\\ 
\frac{2 \pi \nu^{\frac{\a-1}{2}} R d^{3-\a}}{(\a-2)(3-\a)} \frac{P}{N W}, \quad \quad &s_{\! \! \frac{d\,}{\sqrt{\nu}}}=o(1).
\end{array}
\right.
\label{eq:Id}
\end{align}

When $s_R = \omega(1)$, the main asymptotic term is always $I_R$, \ie the first case of~(\ref{eq:IR}).

When $s_R=o(1)$ and $\a < 3$, the main asymptotic term is again $I_R$, now equal to the second case in~(\ref{eq:IR}).

When $s_R=o(1)$ and $\a > 3$, the main asymptotic term is~$I_{\! \! \frac{d\,}{\sqrt{\nu}}}$. So, we obtain the two cases of~(\ref{eq:Id}), when $s_{\! \! \frac{d\,}{\sqrt{\nu}}}= \omega(1)$ and $s_{\! \! \frac{d\,}{\sqrt{\nu}}}=o(1)$, respectively.

For completeness, we consider the excluded values of~$\a$. 
For $\a=3$,
$\a=4$, a simple integration confirms 
Corollary~\ref{cor:asympt}.
For the remaining cases, we have $\a<3$, 
and the 
main asymptotic terms are the same as in the general case.
It suffices to note that, when $s_R=\omega(1)$, we can use $\log(1+s_r)=\log(s_r)+o(1)$ to recover the main asymptotic term: $\pi \nu R^2 \log(s_R)+O(\nu R^2)$.
When $s_R=o(1)$, we use the fact that $\log(1+s_r) \leq s_r$ to perform the integration on $s_r$. The result is asymptotically tight; for $\a<3$, the upper limit $I_R$ is always dominant, and indeed $\log(1+s_r) = s_r +o(s_r)$ when $r=\Theta(R)$.

\subsection{Proof of Lemma~\ref{lem:percolation}}

We consider the Boolean continuum percolation model \cite{roy} where nodes are placed with Poisson intensity $\nu$, and they are connected within distance $x$.
The critical percolation radius
is~$\frac{d}{\sqrt{\nu}}$, where
$d$ is the critical radius with unit node density.

We follow the definition of vacant and occupied regions from~\cite[p. 15]{roy}. 
We consider an annulus of inner perimeter $\ell$ and width $m$. Let $\P(\text{vacant-loop})$ be the probability that the annulus contains a vacant loop of width $x$.
Let $\P(\text{TB-occupied})$ be the probability that the annulus contains an occupied top-bottom crossing connecting the two circular sides. 
Clearly,
\be
\label{eq:vac}
\P(\text{vacant-loop}) =1 - \P(\text{TB-occupied}).
\ee

Let $p_i$, with $i=1,2,\ldots$, and $0 \leq i < \frac{\ell}{x}$ be a sequence of points on the inner annulus boundary, at equal distance~$x$ (except possibly the two points closing the circle, which are at distance at most $x$).
For the existence of an occupied component in the direction of width $m$, there must be at least one occupied component of diameter at least $m$, from some~$p_i$.
The probability that a connected component of diameter at least $m$ exists, is bounded by the location invariant probability that there is a connected path from the origin to the boundary of a square box $[-m,m] \times [-m,m]$, centered at the origin, which we denote: $\P(0 \overset{o}{\leadsto} \partial B_m)$.
Taking a union bound:
\be
\label{eq:occ}
\P(\text{TB-occupied}) \leq \frac{\ell}{x} \cdot \P(0 \overset{o}{\leadsto} \partial B_m).
\ee

With appropriate scaling of the distances by $\sqrt{\nu}$ (to account for a node density $\nu$ instead of $1$), Theorem~2.4 in~\cite{roy} implies that,
for any $x = \frac{k}{\sqrt{\nu}}$, where $k$ is a constant independent of $\nu$ such that $k<d$,
\be
\label{eq:pom}
\P(0 \overset{o}{\leadsto} \partial B_m) \leq  c_1 e^{- c_2 m \sqrt{\nu}},
\ee
for some constants $c_1,c_2 >0$ depending on $k$.

Therefore, from~(\ref{eq:vac}), (\ref{eq:occ}) and~(\ref{eq:pom}), we obtain the bound:
\be
\P(\text{vacant-loop}) \geq 1 - c_1 \frac{\ell \sqrt{\nu}}{k} e^{-c_2 m \sqrt{\nu}}.
\ee
Taking $\ell=2 \pi R$, and $m= \frac{\delta}{\sqrt{\nu}} \log(R \sqrt{\nu})$, with $\delta>\frac{1}{c_2}$,
\be
\P(\text{vacant-loop}) \geq1 - \frac{c_1 2 \pi}{k} \left(R \sqrt{\nu}\right)^{1- \delta c_2}
\underset{\nu R^2 \to \infty}{\to} 1.
\ee

\end{document}